\let\NAT@parse\undefined
\def\real{\mathbb{R}}
\newcommand{\until}[1]{\{1,\dots, #1\}}
\newcommand{\subscr}[2]{#1_{\textup{#2}}}
\newcommand{\setdef}[2]{\{#1 \; | \; #2\}}
\newcommand{\map}[3]{#1: #2 \rightarrow #3}
\newcommand{\argmin}{\operatorname{argmin}}
\newcommand\oprocendsymbol{\hbox{$\square$}}
\newcommand\oprocend{\relax\ifmmode\else\unskip\hfill\fi\oprocendsymbol}
\newcommand\bit[1]{\textit{\textbf{#1}}}
\def \bs {\boldsymbol}
\def \mc {\mathcal}
\newtheorem{theorem}{Theorem}
\newtheorem{proposition}{Proposition}
\newtheorem{lemma}{Lemma}
\newtheorem{corollary}{Corollary}
\newtheorem{remark}{Remark}
\newtheorem{assumption}{Assumption}
\title{On Epidemic Spreading under Mobility on Networks
\thanks{This work was supported by ARO grant W911NF-18-1-0325.}
}
\author{Vishal Abhishek$^{1}$ and Vaibhav Srivastava$^{2}$
\thanks{$^{1}$Vishal Abhishek is with the Department of Mechanical Engineering,
       Michigan State University,
       East Lansing, MI 48824-1226, USA
        {\tt\small abhishe3 at egr.msu.edu}}%
\thanks{$^{2}$Vaibhav Srivastava is with the Electrical and Computer Engineering, Michigan State University, East Lansing, MI 48824-1226, USA
        {\tt\small vaibhav at egr.msu.edu}}%
}
\begin{document}

\maketitle
\thispagestyle{empty}
\pagestyle{empty}

\begin{abstract}
We study a coupled epidemic-mobility model in which, at each time, individuals move in a network of spatially-distributed regions  (sub-populations) according to a Continuous Time Markov Chain (CTMC) and subsequently interact with the local sub-population according to an SIS model. We derive a deterministic continuum limit model describing these interactions. 
We prove the existence of a disease-free equilibrium and an endemic equilibrium under different parameter regimes and establish their (almost) global asymptotic stability using Lyapunov techniques. For the stability of disease-free equilibrium, we also deduce some simple sufficient conditions which highlight the influence of mobility on the behavior of the SIS dynamics. Finally, we numerically illustrate that the derived model provides a good approximation to the stochastic model with a finite population and also demonstrate the influence of the graph structure on the transient performance.
\end{abstract}

\section{INTRODUCTION}

Contagion processes describe spread of entities such as influence, disease, and rumors through a network. In a simple contagion process, exposure to a single contagious individual may spread the contagion, while in a complex contagion, multiple exposures are required~\cite{min2018competing}. Epidemic propagation models belong to the class of simple contagion processes and have been used to model disease spread~\cite{anderson1992infectious, DE-JK:10}, spread of computer viruses~\cite{kleinberg2007computing, wang2009understanding}, routing in mobile communication networks~\cite{zhang2007performance}, and spread of rumors~\cite{jin2013epidemiological}. These models are also closely related to branching processes~\cite{DE-JK:10} that have been used to model social contagion and cascade behavior~\cite{watts2002simple}. 

When epidemic propagation occurs over a network of spatially distributed regions, the movement of individuals across regions influences the epidemic propagation. In this paper, we study a coupled epidemic-mobility model in which individuals may travel across regions according to a mobility model and subsequently affect the epidemic propagation at those regions. Using Lyapunov techniques, we characterize the steady state behavior of the model under different parameter regimes and characterize the influence of mobility on epidemic dynamics. 



Epidemic models have been extensively studied in the literature. SIS (Susceptible-Infected-Susceptible) and SIR (Susceptible-Infected-Recovered)  models are two most widely studied models, wherein individuals are classified into one of the three categories: susceptible, infected or recovered. Classical SIS/SIR models study the dynamics of the fraction of the population in each category~\cite{DE-JK:10}. Network models consider sub-populations clustered into different nodes, and the sub-population-level dynamics is determined by the local SIS/SIR interactions as well as the interactions with neighboring sub-populations in the network graph~\cite{AnalysisandControlofEpidemics_ControlSysMagazine_Pappas, meiBullo2017ReviewPaper_DeterministicEpidemicNetworks, fall2007epidemiological, khanafer_Basar2016stabilityEpidemicDirectedGraph, meiBullo2017ReviewPaper_DeterministicEpidemicNetworks}. Authors in~\cite{Hassibi2013globaldynEpidemics,Ruhi_Hassibi2015SIRS} study network epidemic dynamics in discrete time setting. 

Several generalizations of the SIR/SIS models have been proposed including, SIER model~\cite{ AnalysisandControlofEpidemics_ControlSysMagazine_Pappas, mesbahi2010graph}, where an additional classification ``exposed" is introduced, SIRS~\cite{DE-JK:10,Ruhi_Hassibi2015SIRS}, where individuals get temporary immunity after recovery and then become susceptible again, and SIRI~\cite{gomez2015abruptTransitionsSIRI, pagliara_NaomiL2018bistability}, where after recovery, agents become susceptible with a different rate of infection. The network epidemic dynamics have also been studied for time-varying networks~ \cite{bokharaie2010_EpidemicTVnetwork, Preciado2016_EpidemicTVnetwork,  Beck2018_EpidemicTimeVaryingNetwork}.


Epidemic spread under mobility has been modeled and analyzed as reaction-diffusion process in \cite{colizza2008epidemicReaction-DiffusionMetapopuln, Saldana2008continoustime_Reaction-DiffnMetapopln}. Epidemic spread with mobility on a multiplex network of sub-populations has been modeled and studied in \cite{soriano2018spreading_MultiplexMobilityNetwork_Metapopln}. Authors of this work consider a discrete model in which, at each time, individuals randomly move to another node, participate in epidemic propagation and then return to their home node. 

In this paper, we study a coupled epidemic-mobility model comprised of a set of sub-populations located in a network of  spatially distributed regions. Individuals within each sub-population (region) can travel across regions according to a Continuous Time Markov Chain (CTMC) and upon reaching a new region participate in the local SIS epidemic process. We extend the results for the deterministic network SIS model~\cite{fall2007epidemiological, Hassibi2013globaldynEpidemics, khanafer_Basar2016stabilityEpidemicDirectedGraph, meiBullo2017ReviewPaper_DeterministicEpidemicNetworks} to the proposed model and characterize its steady state and stability properties.  

The major contributions of this paper are fourfold. First, we derive a deterministic continuum limit model describing the interaction of the SIS dynamics with the Markovian mobility dynamics.  We discuss the connections of the derived model with existing population-level models that capture spatial aspects. Second, we rigorously characterize the existence and stability of the equilibrium points of the derived model under different parameter regimes. Third, for the stability of the disease-free equilibrium, 
we determine some useful sufficient conditions which highlight the influence of mobility on the steady state behavior of the dynamics. Fourth, we numerically illustrate that the derived model is a good approximation to the stochastic model with a finite population. We also illustrate the influence of the network topology on the transient properties of the model. 

    
    
    




The remainder of this paper is organized in the following way. In Section \ref{Sec: Mobility Modeling as Continous-Time Markov Process}, we derive the epidemic model under mobility as a continuum limit to two interacting stochastic processes. In Section \ref{sec: analysis}, we characterize the existence and stability of disease-free and endemic equilibrium for the derived model. In Section \ref{Sec: numerical studies}, we illustrate our results using numerical examples. Finally, we conclude in Section \ref{Sec: conclusions}. 

\medskip

\noindent
{\it Mathematical notation:}
For any two real vectors $\bs x$, $\bs y \in \real^n$, we denote:\\
$\bs x \gg \bs y$, if $x_i > y_i$ for all $i \in \until{n}$,\\
$\bs x \geq \bs y$, if $x_i \geq y_i$ for all $i \in \until{n}$,\\
$\bs x > \bs y$, if $x_i \geq y_i$ for all $i \in \until{n}$ and $\bs x \neq \bs y$.\\
For a square matrix $G$, radial abscissa $\map{\mu}{\real^{n\times n}}{\real}$ is defined by 
\[
\mu(G) = \max \setdef{\mathrm{Re}(\lambda)}{\lambda \text{ is an eigenvalue of $G$}}, 
\]
where $\mathrm{Re}(\cdot)$ denotes the real part of the argument. 
Spectral radius $\rho$ is defined by
\[
\rho(G) = \max \setdef{|\lambda|}{\lambda \text{ is an eigenvalue of $G$}}, 
\]
where $|(\cdot)|$ denotes the absolute value of the argument.
For any vector $\bs x = [x_1,\dots,x_n]^\top$, $X=\operatorname{diag}(\bs x)$ is a diagonal matrix with $X_{ii}=x_i$ for all $i \in \until{n}$.

\section{SIS Model under Markovian Mobility} \label{Sec: Mobility Modeling as Continous-Time Markov Process}

We consider $n$ sub-populations of individuals that are located in distinct spatial regions. We assume the individuals within each sub-population can be classified into two categories: (i) susceptible, and (ii) infected. Let $p_i \in [0,1]$ (respectively, $1-p_i$) be the fraction of infected (respectively, susceptible) individuals within sub-population $i \in \until{n}$. We assume that the individuals within each sub-population can travel to regions associated with other sub-populations. Let the connectivity of these regions be modeled by a digraph $\mc G = (\mc V, \mc E)$, where $\mc V =\until{n}$ is the node set and $\mc E \subset \mc V \times \mc V$ is the edge set. We model the mobility of individuals on graph $\mc G$ using a Continuous Time Markov Chain (CTMC) with a stationary generator matrix $Q$, whose $(i,j)$-th entry is $q_{ij}$. The entry $q_{i j} \ge 0$, $i \ne j$, is the instantaneous transition rate from node $i$ to node $j$, and $-q_{ii}= \nu_{i}$ is the total rate of transition out of node $i$, i.e., $\nu_{i} = \sum_{j \ne i}q_{i j}$. Here, $q_{ij} >0$, if $(i,j) \in \mc E$; and $q_{ij}=0$, otherwise. Let $x_{i}(t) \in (0,1)$ be the fraction of the total population that constitutes the sub-population at node $i$ at time $t$. It follows that $\sum_{i=1}^{n}{x_i} = 1$.  Define $\bs p := [p_{1},\dots,p_n]^\top$ and $\bs x := [x_{1},\dots,x_n]^\top$.

We model the interaction of mobility with the epidemic process as follows. At each time $t$, individuals at each node move on graph $\mc G$ according to the CTMC with generator matrix $Q$ and interact with individuals within their current node according to an SIS epidemic process.  For the epidemic process at node $i$, let $\beta_i >0$ and $\delta_i \ge 0$ be the infection and recovery rate, respectively. We let $B >0$ and $D \ge 0$ be the positive and non-negative diagonal matrices with entries $\beta_i$ and $\delta_i$, $i \in \until{n}$, respectively. Similarly we define $P$ as a diagonal matrix with entries $p_i$. We now derive the continuous time dynamics that captures the interaction of mobility and the SIS epidemic dynamics.




\begin{proposition}[\bit{SIS model under mobility}]\label{prop:model}
The dynamics of the fractions of the infected sub-population $\bs p$ and the fractions of the total population $\bs x$ that constitute the sub-population at each node under Markovian mobility model with generator matrix $Q$, and infection and recovery matrices $B$ and $D$, respectively, are
\begin{subequations} \label{eq_Model}
\begin{align}
    \dot{\bs{p}} & = (B-D-L(\bs{x}))\bs{p} - P B \bs{p} \label{eq_p}\\ 
    \dot{\bs{x}} & = Q^\top \bs{x}, \label{eq_x}
\end{align}
\end{subequations}
where $L(\bs{x})$ is a matrix with entries 
\[
l_{ij}(\bs x) = \begin{cases} \sum_{j\neq i}q_{j i} \frac{x_{j}}{x_{i}}, & \text{if } i = j, \\
-q_{j i} \frac{x_{j}}{x_{i}}, & \text{otherwise}.
\end{cases}
\]
\end{proposition}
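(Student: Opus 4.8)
The plan is to obtain \eqref{eq_Model} as the deterministic (continuum) limit of a density-dependent Markov chain that, for a population of size $N$, tracks the number $N_i$ of individuals located at node $i$ and the number $I_i$ of infected individuals at node $i$. Writing $x_i = N_i/N$ and $p_i = I_i/N_i$, the equation for $\bs x$ will be driven entirely by the mobility process, while the equation for $\bs p$ will combine the local SIS transitions at node $i$ with the transport of infection status across the graph induced by mobility.

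First I would compute the drift of $N_i$. Each individual currently at node $i$ jumps to $j \neq i$ at rate $q_{ij}$ independently of its infection status, so the net rate of change of $N_i$ is $-\nu_i N_i + \sum_{j\neq i} q_{ji} N_j = (Q^\top \bs N)_i$. Passing to the limit $N\to\infty$ (law of large numbers for density-dependent chains) gives $\dot{\bs x} = Q^\top \bs x$, i.e.\ \eqref{eq_x}; moreover, since $e^{Q^\top t}$ is nonnegative and $\bs x(0)\gg \bs 0$, the solution remains in the positive orthant, so $L(\bs x)$ is well defined along trajectories.

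Next I would compute the drift of $I_i$. Two mechanisms act: the local, well-mixed SIS process at node $i$ creates infections at rate $\beta_i p_i(1-p_i)N_i$ and clears them at rate $\delta_i p_i N_i$; and mobility, which preserves infection status, contributes $-\nu_i I_i + \sum_{j\neq i} q_{ji} I_j = (Q^\top \bs I)_i$. Hence $\dot I_i = \beta_i p_i(1-p_i)N_i - \delta_i p_i N_i + (Q^\top \bs I)_i$. Using $I_i = p_i N_i$ and $\dot p_i = (\dot I_i - p_i \dot N_i)/N_i$, the SIS part yields $\beta_i p_i(1-p_i) - \delta_i p_i = (B\bs p)_i - (PB\bs p)_i - (D\bs p)_i$, and the mobility part yields
\[
\frac{(Q^\top\bs I)_i}{N_i} - \frac{p_i (Q^\top \bs N)_i}{N_i} \;=\; \sum_{j\neq i} q_{ji}\,\frac{x_j}{x_i}\,(p_j - p_i),
\]
after the diagonal ($j=i$) terms $-\nu_i p_i$ cancel and $N_j/N_i$ is rewritten as $x_j/x_i$. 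The right-hand side is exactly $-(L(\bs x)\bs p)_i$ with $L(\bs x)$ as in the statement (note $L(\bs x)$ has zero row sums), and assembling the $n$ scalar identities into vector form produces $\dot{\bs p} = (B-D-L(\bs x))\bs p - PB\bs p$, which is \eqref{eq_p}.

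I expect the only genuinely delicate point to be the passage from the stochastic model to the ODE: one must check that the transition rates are smooth functions of the empirical densities on compact subsets of the interior of the simplex (so that a Kurtz-type theorem applies on any finite horizon on which $\bs x$ stays bounded away from the boundary), and that ``force of infection $\propto$ local fraction infected'' is the right well-mixed closure for the sub-population momentarily co-located at a node. Everything else is the bookkeeping above, the one structural point being that the mobility-induced change in $p_i$ depends only on the ratios $x_j/x_i$ and the differences $p_j-p_i$, which is precisely what makes $L(\bs x)$ a state-dependent, weighted Laplacian-type matrix.
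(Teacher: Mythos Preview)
Your derivation is correct and follows essentially the same route as the paper: compute the evolution of the population fraction $x_i$ from mobility alone, compute the evolution of the infected mass $x_i p_i$ (your $I_i/N$) from local SIS plus transport, and recover $\dot p_i$ by the quotient rule, then recognize the mobility contribution as $-(L(\bs x)\bs p)_i$. The only difference is cosmetic: the paper writes both balances over a small time step $h$ and passes to the ODE by taking $h\to 0^+$ formally, whereas you frame the passage as a Kurtz-type $N\to\infty$ law of large numbers for a density-dependent chain; the bookkeeping and the resulting equations are identical.
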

\medskip

\begin{proof}
Consider a small time increment $h>0$ at time $t$. Then the fraction of the total population present at node $i$ after the evolution of CTMC in time-interval $[t, t+h)$ is
\begin{equation} \label{eq_popln}
   x_{i}(t+h)= x_{i}(t)(1-\nu_{i}h)+ \displaystyle\sum_{j\neq i}q_{j i} x_{j}(t)h + o(h) .
\end{equation}
Individuals within each node interact according to SIS dynamics. Thus, the fraction of infected population present at node $i$ is: 
\begin{multline} \label{eq_SIS_No of Infected}
\! \! \! \! \! \! x_{i}(t+h) p_{i}(t+h)= - x_{i}(t) \delta_{i} p_{i}(t)h + x_{i}(t)\beta_{i} p_{i}(t)(1-p_{i}(t))h \\
   + x_{i}(t) p_{i}(t)(1-\nu_{i}h) + \displaystyle\sum_{j\neq i}q_{j i} p_{j}(t) x_{j}(t)h + o(h). 
\end{multline}
The first two terms on the right side of \eqref{eq_SIS_No of Infected} correspond to epidemic process within each node, whereas the last two terms correspond to infected individuals coming from other nodes due to mobility. Using the expression of $x_{i}$ from \eqref{eq_popln} in \eqref{eq_SIS_No of Infected} and taking the limit $h \to 0^+$ gives
\begin{multline} \label{eq_SIS_pi}
   \dot{p}_{i}= - \delta_{i} p_{i} + \beta_{i} p_{i}(1-p_{i})
   +\displaystyle\sum_{j\neq i}q_{j i} (p_{j}-p_{i}) \frac{x_{j}}{x_{i}} .
\end{multline}
Similarly taking limits in \eqref{eq_popln} yields
\begin{equation} \label{eq_popln_det}
   \dot{x}_{i} = -\nu_{i} x_{i} + \displaystyle\sum_{j\neq i}q_{j i} x_{j}.
\end{equation}
Rewriting \eqref{eq_SIS_pi} and \eqref{eq_popln_det} in vector form establishes the proposition. 
\end{proof}

\begin{remark}[\bit{Comparison with other models}]
The population level epidemic propagation models in theoretical ecology incorporate spatial aspects by using a partial differential equation that is obtained by adding a spatial diffusion operator to the infected population dynamics~\cite{stone2012sir}. Since, Laplacian matrix is a diffusion operator on a graph, dynamics~\eqref{eq_Model} can be interpreted as a network equivalent of the population models with spatial aspects. The dependence of the Laplacian matrix on $\bs x$ in~\eqref{eq_Model} is more general than the constant diffusion coefficient discussed in~\cite{stone2012sir}. \oprocend 
\end{remark}

\section{Analysis of SIS Model under Markovian Mobility} \label{sec: analysis}
In this section, we analyze the SIS model under mobility~\eqref{eq_Model} under the following standard assumption: 
\begin{assumption} \label{Assumption:StrongConnectivity}
 Digraph $\mc G$ is strongly connected which is equivalent to matrix $Q$ being irreducible \cite{Bullo-book_Networks}. \oprocend
\end{assumption}
Let $\bs v$ be the right eigenvector  of $Q^\top$ associated with eigenvalue at $0$. We assume that $\bs v$ is scaled such that its inner product with the associated left eigenvector $\bs 1_{n}$ is unity, i.e., $\bs 1_{n}^\top \bs v = 1$. We call an equilibrium point $(\bs p^*, \bs x^*)$, an endemic equilibrium point, if at equilibrium the disease does not die out, i.e., $\bs p^* \neq 0$, otherwise, we call it a disease-free equilibrium point. Let $L^*:=L(\bs x^*)=L(\bs v)$.


\begin{theorem}[\bit{Existence and Stability of Equilibria}] \label{thm:stability}
For the SIS model under Markovian mobility~\eqref{eq_Model} with Assumption~\ref{Assumption:StrongConnectivity}, the following statements hold 
\begin{enumerate}
    \item if $\bs p(0) \in [0,1]^n$, then $\bs p(t) \in [0,1]^n$ for all $t>0$. Also, if $\bs p(0) > \bs 0_n$, then $\bs p(t) \gg \bs 0_n$ for all $t>0$;
    \item the model admits a disease-free equilibrium at $(\bs p^*, \bs x^*)= (\bs 0_n, \bs v)$; 
    \item the model admits an endemic equilibrium at $(\bs p^*, \bs x^*) = (\bar{\bs p}, \bs v)$, $\bar{\bs p} \gg 0$,  if and only if $\mu (B-D-L^*) > 0$; 
    \item the disease-free equilibrium is globally asymptotically stable if and only if $\mu (B-D-L^*) \leq 0$ and is unstable otherwise;
    \item the endemic equilibrium is almost globally asymptotically stable if $\mu (B-D-L^*) > 0$ with region of attraction $\bs p(0) \in [0,1]^n$ such that $\bs p(0) \neq \bs 0_n$. 
    \end{enumerate}
\end{theorem}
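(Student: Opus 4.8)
\emph{Proof proposal for statement (v).}
The plan is to exploit the cascade structure of~\eqref{eq_Model}: the population dynamics~\eqref{eq_x} are autonomous and decoupled from $\bs p$, and under Assumption~\ref{Assumption:StrongConnectivity} the matrix $Q^\top$ is an irreducible Metzler matrix with a simple eigenvalue $0$ (eigenvector $\bs v\gg\bs 0_n$) and every other eigenvalue in the open left half plane. Since $\bs 1_n^\top\bs x(t)\equiv\bs 1_n^\top\bs x(0)=1$, the trajectory $\bs x(t)$ converges to $\bs v$ exponentially fast and stays in a compact subset of $\operatorname{int}\real^n_{\geq0}$; hence $L(\bs x(t))\to L^*=L(\bs v)$ exponentially. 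Thus~\eqref{eq_p} is an asymptotically autonomous perturbation of the autonomous limiting system $\dot{\bs p}=f(\bs p):=(B-D-L^*)\bs p-PB\bs p$. I would first prove that $\bar{\bs p}$ is almost globally asymptotically stable for this limiting system, and then transfer the conclusion to~\eqref{eq_Model}.

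\emph{Step 1 (limiting system).} Put $M:=B-D-L^*$. Because $L^*$ has nonpositive off-diagonal entries and zero row sums, $M$ is Metzler, and irreducible by Assumption~\ref{Assumption:StrongConnectivity}; hence the flow of $f$ is strongly monotone on $\operatorname{int}\real^n_{\geq0}$. Moreover $f(\lambda\bs p)\geq\lambda f(\bs p)$ for $\lambda\in(0,1)$ and $\bs p\geq\bs 0_n$, with strict inequality in every coordinate where $p_i>0$, i.e.\ $f$ is strictly subhomogeneous (the nonlinearity $-PB\bs p$ has entries $-\beta_i p_i^2$). By statement (i) the box $[0,1]^n$ is forward invariant, and statement (iii) provides, when $\mu(M)>0$, an equilibrium $\bar{\bs p}\gg\bs 0_n$ in $[0,1]^n$, which subhomogeneity forces to be the unique nonzero equilibrium. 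The threshold dichotomy for cooperative, irreducible, subhomogeneous systems then yields that $\bar{\bs p}$ attracts every trajectory starting in $[0,1]^n\setminus\{\bs 0_n\}$; alternatively one constructs a Volterra/LaSalle-type Lyapunov function built from $\bar{\bs p}$. Finally, the Jacobian $J^*$ of $f$ at $\bar{\bs p}$ is Metzler and, using the equilibrium identity $M\bar{\bs p}=\operatorname{diag}(\bar{\bs p})\,B\bar{\bs p}$, satisfies $J^*\bar{\bs p}\ll\bs 0_n$, so $J^*$ is Hurwitz and $\bar{\bs p}$ is locally exponentially stable.

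\emph{Step 2 (lifting to the full model).} I would use a comparison argument, the monotone-systems analogue of converging-input stability. Since $L(\bs x(t))\to L^*$ entrywise while preserving the sign pattern of $L^*$, for each small $\epsilon>0$ there are $T_\epsilon>0$ and fixed irreducible Metzler matrices $M_\epsilon^-\preceq M\preceq M_\epsilon^+$ with $\mu(M_\epsilon^\pm)\to\mu(M)>0$ as $\epsilon\downarrow0$, such that $M_\epsilon^-\preceq B-D-L(\bs x(t))\preceq M_\epsilon^+$ for all $t\geq T_\epsilon$; hence, componentwise, $h_\epsilon(\bs p)\leq\dot{\bs p}\leq g_\epsilon(\bs p)$ along trajectories of~\eqref{eq_p} for $t\geq T_\epsilon$, with $g_\epsilon(\bs p):=M_\epsilon^+\bs p-PB\bs p$ and $h_\epsilon(\bs p):=M_\epsilon^-\bs p-PB\bs p$ again cooperative, irreducible and subhomogeneous. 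Applying Step 1 to each (on a sufficiently large forward-invariant box), and using that statement (i) forces $\bs p(T_\epsilon)\gg\bs 0_n$ whenever $\bs p(0)\neq\bs 0_n$, the comparison principle for cooperative systems gives $\bar{\bs p}_\epsilon^-\leq\liminf_{t\to\infty}\bs p(t)\leq\limsup_{t\to\infty}\bs p(t)\leq\bar{\bs p}_\epsilon^+$, where $\bar{\bs p}_\epsilon^\pm\gg\bs 0_n$ are the unique positive equilibria of $g_\epsilon,h_\epsilon$. Letting $\epsilon\downarrow0$ and invoking continuity of the unique positive equilibrium in the defining matrix, $\bar{\bs p}_\epsilon^\pm\to\bar{\bs p}$, so $\bs p(t)\to\bar{\bs p}$; combined with $\bs x(t)\to\bs v$, the equilibrium $(\bar{\bs p},\bs v)$ attracts every trajectory with $\bs p(0)\in[0,1]^n\setminus\{\bs 0_n\}$. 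Lyapunov stability of $(\bar{\bs p},\bs v)$ follows by cascading the exponentially decaying input $L(\bs x(t))-L^*$ into the locally exponentially stable $\bs p$-subsystem of Step 1; stability together with the global attractivity just shown is the asserted almost global asymptotic stability. (As an alternative to the comparison step one may invoke Thieme's theory of asymptotically autonomous semiflows: the $\omega$-limit set is compact, connected and chain recurrent for the limiting flow, hence equals $\{\bs 0_n\}$ or $\{\bar{\bs p}\}$, and $\{\bs 0_n\}$ is excluded because, with $\bs w\gg\bs 0_n$ the left Perron eigenvector of $M$, the scalar $\bs w^\top\bs p(t)$ is positive for all $t$ and strictly increasing whenever it is small.)

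\emph{Main obstacle.} The technical heart is Step 1 — that for the limiting system $\bar{\bs p}$ attracts all of $[0,1]^n\setminus\{\bs 0_n\}$. This is where one must either import the nonelementary dichotomy for monotone subhomogeneous systems or build an explicit Lyapunov function tailored to the mobility coupling $L^*$, which, unlike the saturating infection term in classical network-SIS models, enters $f$ linearly; the asymmetry of $L^*$ (a digraph Laplacian weighted by $v_j/v_i$) also rules out naive symmetrization. The second delicate point is the ``almost'' qualifier in Step 2: ruling out convergence to the disease-free equilibrium rests entirely on the strict positivity in statement (i), which launches the lower comparison system from a nonzero state $\bs p(T_\epsilon)\gg\bs 0_n$ and hence pulls $\liminf_{t\to\infty}\bs p(t)$ up to $\bar{\bs p}_\epsilon^-$; only asymptotic (not uniform) convergence $L(\bs x(t))\to L^*$ is needed, with exponential convergence used solely for the local-stability cascade.
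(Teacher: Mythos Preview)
Your approach is correct and takes a genuinely different route from the paper. The paper works directly with the nonautonomous system: it constructs a quadratic Lyapunov-like function $V_2(\tilde{\bs p},t)=\tilde{\bs p}^\top R_2\tilde{\bs p}-2nr_2\int_0^t\|\tilde L\|\,ds$, where $R_2>0$ is diagonal and solves the Lyapunov inequality for the irreducible Metzler matrix $B-D-L^*-P^*B$ (whose abscissa is zero with Perron eigenvector $\bar{\bs p}$), and applies Barbalat's lemma to obtain $\tilde p_i^{\,2} p_i\to0$ for every $i$; a separate structural lemma (if one $p_i\to0$ then, by irreducibility and Barbalat again, all do) together with instability of the origin is then invoked to exclude $\bs p\to\bs 0$. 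By contrast, you first freeze $L$ at $L^*$, settle the limiting system via the cooperative/strictly-subhomogeneous global dichotomy, and then lift to~\eqref{eq_p} by sandwiching the time-varying trajectory between two nearby autonomous cooperative systems whose positive equilibria $\bar{\bs p}_\epsilon^\pm$ converge to $\bar{\bs p}$. The paper's argument is more self-contained---an explicit Lyapunov function, no imported dichotomy theorem---and reuses verbatim the template from the disease-free case; your argument imports heavier machinery (the monotone--subhomogeneous threshold theorem, or Thieme's chain-recurrence alternative) but handles the ``almost'' qualifier more robustly: the lower comparison system furnishes a strictly positive floor $\bar{\bs p}_\epsilon^-\gg\bs 0_n$ on $\liminf\bs p(t)$, whereas the paper's step ``$\bs 0$ is unstable, hence $\bs p(0)\neq\bs 0$ cannot converge to $\bs 0$'' is not by itself a valid inference and would need additional justification. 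Your caveats about preserving the irreducibility pattern in $M_\epsilon^\pm$ and enlarging the forward-invariant box for the upper system are the right ones; both are routine here since the sign pattern of $L(\bs x(t))$ is fixed and the scalar nonlinearity $-\beta_i p_i^2$ eventually dominates.
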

\medskip

\begin{proof}
The first part of statement (i) follows from the fact that $\dot{\bs{p}}$ is either directed tangent or inside of the region $[0,1]^n$ at its boundary which are surfaces with $p_i =0$ or $1$ . For the second part of (i), we rewrite \eqref{eq_p} as:
\begin{equation*}
    \dot{\bs{p}} = (B(I-P)+A(\bs{x}))\bs{p} - E(t) \bs{p}
\end{equation*}
where $L(\bs x)=C(\bs x)-A(\bs x)$ with $C(\bs x)$ composed of the diagonal terms of $L(\bs x)$, $A(\bs x)$ is the non-negative matrix corresponding to the off-diagonal terms, and $E(t)=C(\bs x(t))+D$ is a diagonal matrix. Now, consider a variable change $\bs y(t) := e^{\int_{0}^{t}E(t) dt}\bs p(t)$. The rest of the proof is same as in \cite[Theorem 4.2 (i)]{meiBullo2017ReviewPaper_DeterministicEpidemicNetworks}.\\
The second statement follows by inspection.\\
The proof of the third statement is presented in Appendix~\ref{Appendix: existence of non-trivial eqb}. 

\noindent\textbf{Stability of disease-free equilibria:}
To prove the fourth statement, we begin by establishing sufficient conditions for instability. The linearization of \eqref{eq_Model} at $(\bs p, \bs x) = (\bs 0, \bs v)$ is
\begin{equation} \label{eq_px linear}
    \begin{bmatrix}
     \dot{\bs p} \\
     \dot{\bs x}
     \end{bmatrix} = \begin{bmatrix}
     B-D-L^* & 0_{n\times n} \\
     0_{n\times n} & Q^\top
     \end{bmatrix}\begin{bmatrix}
        \bs p \\
        \bs x
     \end{bmatrix} .
\end{equation}
Since the system matrix in~\eqref{eq_px linear} is block-diagonal, its eigenvalues are the eigenvalues of the block-diagonal sub-matrices. Further, since radial abscissa $\mu(Q^\top)$ is zero, a sufficient condition for instability of the disease-free equilibrium is that $\mu (B-D-L^*) > 0$.

For the case of $\mu (B-D-L^*) \leq 0$, we now show that the disease-free equilibrium is a globally asymptotically stable equilibrium.
Since $(B-D-L^*)$ is an irreducible Metzler matrix with $\mu (B-D-L^*) \leq 0$, there exists a positive diagonal matrix $R$ such that 
\[
R(B-D-L^*)+(B-D-L^*)^ \top R = -K,
\]
where $K$ is a positive semi-definite matrix \cite[Proposition 1 (iv), Lemma A.1]{khanafer_Basar2016stabilityEpidemicDirectedGraph}.  Define $\Tilde{L} := L(\bs x)-L^*$ and $r := \|R\|$, where $\|\cdot\|$  denotes the the induced two norm of the matrix. 

Since $\bs x(0) \gg 0$, under Assumption~\ref{Assumption:StrongConnectivity}, $x_i(t)$ is lower bounded by some positive constant and hence, $\Tilde{L}$ is bounded and continuously differentiable.
Since $\bs x$ is bounded and exponentially converges to $\bs x^*$, it follows that $\|\Tilde{L}(x)\|$ locally exponentially converges to $\|\Tilde{L}(\bs x^*)\| = 0$ and $\int_{0}^{t} \|\Tilde{L}\| d t$ is bounded for all $t>0$.

Consider the Lyapunov-like function $V(\bs p, t) = \bs p^\top R \bs p - 2 n r \int_{0}^{t} \|\Tilde{L}\| d t$. 
It follows from the above arguments that $V$ is bounded. Therefore,
\begin{align}\label{Vdot_trivial}
     \dot{V} & = \bs p^\top R \dot{\bs p} + \dot{\bs p}^\top R \bs p -2 n r \|\Tilde{L}\| \nonumber \\
            & = \bs p^\top (R(B-D-L^*)+(B-D-L^*)^\top R) \bs p \nonumber \\
            & \quad -2 \bs p^\top R (L(\bs x)-L^*)\bs p - 2\bs p^\top R P B \bs p -2 n r\|\Tilde{L}\| \nonumber \\
            & = -\bs p^\top K \bs p -2 \bs p^\top R\tilde{L}(\bs x)\bs p - 2\bs p^\top R P B \bs p \nonumber  \\
            & \quad - 2 n r\|\Tilde{L}\| \nonumber \\
            & \leq -\bs p^\top K \bs p + 2 n r \|\Tilde{L}\| - 2 n r \|\Tilde{L}\| - 2 \bs p^\top R P B \bs p \nonumber  \\
            & \leq - 2 \bs p^\top R P B \bs p \leq 0 .
\end{align}
Since all the signals and their derivatives are bounded, it follows that $\Ddot{V}(t)$ is bounded and hence $\dot{V}$ is uniformly continuous in $t$. Therefore from Barbalat's lemma and its application to Lyapunov-like functions ~\cite[Lemma 4.3, Chapter 4]{slotine1991applied} it follows that $\dot{V} \rightarrow 0$ as $t \rightarrow \infty$. Consequently, from \eqref{Vdot_trivial}, $\bs p^\top R P B \bs p \rightarrow 0$. Since $R > 0$, $B > 0$  and $ p_i \geq 0$, $\bs p(t) \rightarrow \bs 0$ as $t \rightarrow \infty$. This establishes global attractivity of the disease-free equilibrium point. We now establish its stability. 

We note that $\|\Tilde{L}(\bs x)\|$  is a real analytic function of $\bs x$, for $\bs x \gg \bs 0$. Therefore, there exists a region $\|\bs x - \bs x^*\|<\delta_1$ in which $\|\Tilde{L}(\bs x)\|\leq k_1\|\bs x - \bs x^*\|$ for some $k_1>0$. Also, since $\bs x - \bs x^*$ is globally exponentially stable, $\|\bs x(t) - \bs x^*\| \leq  k_2 e^{-\alpha t} \|\bs x(0) - \bs x^*\|$ for some $k_2$, $\alpha >0$. Thus, if $\|\bs x(0) - \bs x^*\| < \frac{\delta_1}{k_2}$, then $\|\Tilde{L}(\bs x)\|\leq k_1 k_2 e^{-\alpha t}\|\bs x(0) - \bs x^*\|$. This implies $\int_{0}^{t} \|\Tilde{L}\| d t \leq \frac{k}{\alpha}\|\bs x(0) - \bs x^*\|$, where $k:=k_1 k_2$.  Now, since $\dot{V}(\bs p, t)\leq 0$, 
\begin{equation*} \label{eq_trivial stability}
\begin{split}
  V(\bs p(0), 0) &= \bs p(0)^\top R \bs p(0) \\
       &\geq V(\bs p(t), t) \\
       &\geq \bs p(t)^\top R \bs p(t) -2\frac{n r k \|\bs x(0)-\bs x^*\|}{\alpha} \\
       &\geq \subscr{R}{min}\|\bs p(t)\|^2 - 2\frac{n r k \|\bs x(0)-\bs x^*\|}{\alpha} ,
  \end{split}
\end{equation*}
  where $\subscr{R}{min} = \min_{i} (R_{i})$. Equivalently, 
\begin{equation*}
\begin{split}
  \|\bs p(t)\|^2 &\leq \frac{r}{\subscr{R}{min}} \|\bs p(0)\|^2 + 2\frac{n r k\|\bs x(0)-\bs x^*\|}{\alpha \subscr{R}{min}}.
  \end{split}
\end{equation*}
It follows using stability of $\bs x$ dynamics, that for any $\epsilon >0$, there exists $\delta >0$ , such that $\| \bs x(0)-\bs x^*\|^2 + \| \bs p(0) \|^2 \leq \delta ^2 \Rightarrow \| \bs p(t)\|^2 + \| \bs x(t)-\bs x^*\|^2 \leq \epsilon ^2$. This establishes stability. Together, global attractivity and stability prove the fourth statement. 

\noindent\textbf{Stability of endemic equilibria:}
Finally, we prove the fifth statement. To this end, we first establish an intermediate result. 
\begin{lemma} \label{Lemma:p_i tends to 0 implies p tends to 0}
For the dynamics~\eqref{eq_p}, if $p_{i}(t) \rightarrow 0$ as $t \rightarrow \infty$, for some $i \in \until{n}$, then $\bs p(t) \rightarrow \bs 0$ as $t \to \infty$.\\
\end{lemma}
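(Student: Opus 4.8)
The plan is to reduce the claim to a scalar comparison argument carried along the graph, using strong connectivity to spread convergence from the single node $i$ to all nodes. Writing the $i$-th component of \eqref{eq_p} (equivalently \eqref{eq_SIS_pi}) in the form
\[
\dot p_i=\Big(\beta_i-\delta_i-\beta_i p_i-\sum_{j\neq i}q_{ji}\tfrac{x_j}{x_i}\Big)p_i+\sum_{j\neq i}q_{ji}\tfrac{x_j}{x_i}\,p_j=:\phi_i(t)\,p_i+\psi_i(t),
\]
I would first record the boundedness facts needed throughout: by statement~(i) of Theorem~\ref{thm:stability}, $\bs p(t)\in[0,1]^n$; by the bounds established in the proof of statement~(iv), each $x_i(t)$ lies in $[\underline c,1]$ for some $\underline c>0$ and $\dot{\bs x}$ is bounded; and hence from \eqref{eq_p} itself $\dot{\bs p}$ is bounded. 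Consequently $\phi_i$ is bounded, the forcing term $\psi_i$ is nonnegative, and both $p_i$ and $\psi_i$ are uniformly continuous on $[0,\infty)$ (for $\psi_i$ this uses the strictly positive lower bound on the $x_i$'s to bound $\dot\psi_i$).

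The heart of the proof is the implication $p_i(t)\to 0\ \Rightarrow\ \psi_i(t)\to 0$. I would obtain it by integrating over unit windows: $\int_t^{t+1}\dot p_i\,ds=p_i(t+1)-p_i(t)\to 0$, and $\big|\int_t^{t+1}\phi_i p_i\,ds\big|\le(\sup_{s\ge 0}|\phi_i(s)|)\sup_{s\ge t}p_i(s)\to 0$, so $\int_t^{t+1}\psi_i\,ds\to 0$; since $\psi_i\ge 0$ is uniformly continuous, this forces $\psi_i(t)\to 0$ (a recurring spike of height $\epsilon$ would, by uniform continuity, persist at height $\epsilon/2$ over an interval of fixed length, contradicting the vanishing window integral). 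Because every summand $q_{ji}\tfrac{x_j}{x_i}p_j$ of $\psi_i$ is nonnegative and $q_{ji}\tfrac{x_j}{x_i}\ge q_{ji}\underline c>0$ whenever $(j,i)\in\mc E$, it follows that $p_j(t)\to 0$ for every in-neighbour $j$ of $i$.

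It then remains to propagate. Let $S=\{m\in\mc V:\ p_m(t)\to 0\}$. The previous paragraph, applied verbatim with $i$ replaced by any node of $S$, shows that $S$ is closed under taking in-neighbours, and $i\in S$ by hypothesis. Under Assumption~\ref{Assumption:StrongConnectivity} there is, for every $m\in\mc V$, a directed path $m\to\cdots\to i$; traversing it backwards from $i$ and using closedness under in-neighbours at each step places $m$ in $S$. Hence $S=\mc V$, i.e.\ $\bs p(t)\to\bs 0_n$.

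I expect the main obstacle to be the scalar step $p_i(t)\to 0\Rightarrow\psi_i(t)\to 0$: the subtlety is ruling out thin, tall spikes of $\psi_i$ while $p_i\to 0$, which is exactly why uniform continuity of $\psi_i$ — resting on the positive lower bound for the $x_i$'s, guaranteed by Assumption~\ref{Assumption:StrongConnectivity} — is indispensable, and the constant bookkeeping there (and in the uniform window length) must be handled with care. The graph-propagation step, although the conceptual crux of the lemma, is then routine.
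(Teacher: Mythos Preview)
Your argument is correct and follows the same overall route as the paper: show that $p_i\to 0$ forces the nonnegative coupling term $\psi_i=\sum_{j\neq i}q_{ji}\tfrac{x_j}{x_i}p_j$ to vanish, deduce $p_j\to 0$ for every in-neighbour $j$, and then propagate along the strongly connected graph. The only difference is in the scalar step: the paper applies Barbalat's lemma directly to $\dot p_i$ (bounded $\ddot p_i$ gives uniform continuity, $p_i\to 0$ gives a convergent integral, hence $\dot p_i\to 0$, and then $\psi_i\to 0$ follows from the equation since $\phi_i p_i\to 0$), whereas you isolate $\psi_i$ first via the identity $\int_t^{t+1}\psi_i=\int_t^{t+1}\dot p_i-\int_t^{t+1}\phi_i p_i$ and then reprove the Barbalat mechanism by hand for $\psi_i$. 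Both are valid; the paper's version is a touch shorter since it invokes Barbalat once on $\dot p_i$ rather than redoing the spike-vs-uniform-continuity argument, but your version has the virtue of being self-contained.
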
 

\begin{proof}
The dynamics of $p_i$ are
\begin{equation} \label{pi dot expanded}
    \dot{p_{i}} = (\beta_{i} - \delta_{i}-l_{i i}(\bs x)) p_{i} - \displaystyle\sum_{j\neq i} l_{i j}(\bs x) p_{j} - \beta_{i} p_{i}^2 .
\end{equation}
It can be easily seen that $\Ddot{p}_{i}$ is bounded and hence $\dot{p}_{i}$ is uniformly continuous in $t$. Now if $p_{i}(t) \rightarrow 0$ as $t \rightarrow \infty$, it follows from Barbalat's lemma \cite[Lemma 4.2]{slotine1991applied} that $\dot{p}_{i} \rightarrow 0$. Therefore, from \eqref{pi dot expanded} and the fact that $- l_{i j}(\bs x) \geq 0$ and $p_{i} \geq 0$, it follows that $p_{j}(t) \rightarrow 0$ for all $j$ such that $- l_{i j} (\bs x) \neq 0$. Using Assumption~\ref{Assumption:StrongConnectivity} and applying the above argument at each node implies
$\bs p(t) \rightarrow \bs 0$. 
\medskip
\end{proof}

Define $\Tilde{\bs p} := \bs p-\bs p^*$, $P^* := \operatorname{diag}(\bs p^*)$ and $\Tilde{P} := \operatorname{diag}(\Tilde{\bs p})$. Then
\begin{equation*}
    \begin{split}
        \dot{\Tilde{\bs p}} & =  (B-D-L(\bs x)- P B) \bs p \\
                        & =  (B-D-L^*- P^* B) \bs p^* + (B-D-L^*- P^* B) \Tilde{\bs p} \\
                        & \quad - \Tilde{L}(\bs x) \bs p - \Tilde{P}B \bs p \\
                        & = (B-D-L^*- P^* B) \Tilde{\bs p} -  \Tilde{L}(\bs x) \bs p - \Tilde{P}B \bs p .
    \end{split}
\end{equation*}
where $(B-D-L^*- P^* B) \bs p^* = \bs 0$, as ($\bs p^*$, $\bs x^*$) is an equilibrium point.

Note that $(B-D-L^*- P^* B)$ is an irreducible Metzler matrix. The Perron-Frobenius theorem for irreducible Metzler matrices \cite{Bullo-book_Networks} implies 
$\mu (B-D-L^*- P^* B) = 0$ and the associated eigenvector $\bs p^* \gg \bs 0_n$. Also, this means there exists a positive-diagonal matrix $R_2$ and a positive semi-definite matrix $K_2$ such that
\[
R_{2}(B-D-L^* -P^*B)+(B-D-L^*-P^*B)^\top R_{2} = -K_2 .
\]

Similar to the proof of the fourth statement, take $V_{2}(\tilde{\bs p}, t) = \Tilde{\bs p}^\top R_{2} \Tilde{\bs p} - 2n r_{2} \int_{0}^{t} \|\Tilde{L}\| d t$, where $r_{2} := \|R_{2}\|$.
Then,
\begin{equation*} \label{Vdot_non-trivial}
\begin{split}
           \dot{V_{2}} & = \tilde{\bs p}^\top R_{2} \dot{\tilde{\bs p}} + \dot{\tilde{\bs p}}^\top R_{2} \tilde{\bs p} -2n r_{2} \|\Tilde{L}\| \\
            & = \tilde{\bs p}^\top (R_{2}(B-D-L^*-P^*B)\\
            & +(B-D-L^*-P^*B)^\top R_{2}) \tilde{\bs p} \\
            & -2 \tilde{\bs p}^\top R_{2}\Tilde{L}(\bs x)\bs p - 2 \tilde{\bs p}^\top R_{2} \tilde{P} B \bs p -2n r_{2}\|\Tilde{L}\|\\
            & = -\tilde{\bs p}^\top K_{2} \tilde{\bs p} -2 \tilde{\bs p}^\top R_{2}\Tilde{L}(\bs x)\bs p - 2 \tilde{\bs p}^\top R_{2} \tilde{P} B \bs p -2n r_{2}\|\Tilde{L}\|\\
            & \leq -\tilde{\bs p}^\top K_{2} \tilde{\bs p}  + 2 n r_{2} \|\Tilde{L}\| - 2 n r_{2} \|\Tilde{L}\| - 2 \tilde{\bs p}^\top R_{2} \tilde{P} B \bs p\\
            & \leq - 2 \tilde{\bs p}^\top R_{2} \tilde{P} B \bs p = -2\displaystyle\sum_{i=1}^{n} (R_2)_i\beta_i \tilde{p}_{i}^2 p_i \leq 0 .
\end{split}
\end{equation*}
 It can be easily shown that $\Ddot{V}_{2}$ is bounded implying $\dot{V}_{2}$ is uniformly continuous. Applying Barbalat's lemma \cite[Lemma 4.2]{slotine1991applied} gives $\dot{V}_{2} \rightarrow 0$ as $t \rightarrow \infty$. Now, since  $R_{2}$ and $B$ are positive diagonal matrices this implies that $\tilde{p}_{i} p_{i} \rightarrow 0$, for each $i$. Using Lemma \ref{Lemma:p_i tends to 0 implies p tends to 0}, and the fact that $\bs p= \bs 0$ is an unstable equilibrium for $\mu (B-D-L^*) > 0$, we have $\tilde{\bs p} \rightarrow \bs 0$ as long as $\bs p(0) \neq \bs 0$.
Stability can be established similarly to the disease-free equilibrium case. This concludes the proof of the theorem. 
\end{proof}

\begin{corollary}[\bit{Stability of disease-free equilibria}] \label{cor:dis-free}
For the SIS epidemic model under Markovian mobility~\eqref{eq_Model} with Assumption~\ref{Assumption:StrongConnectivity} and the disease-free equilibrium $(\bs p^*, \bs x^*)= (\bs 0_n, \bs v)$ the following statements hold
\begin{enumerate}
    \item a necessary condition for stability is $\delta_{i} > \beta_{i} - \nu_{i}$, for each $i \in \until{n}$; 
    \item  a necessary condition for stability is that there exists some $i \in \until{n}$ such that $\delta_i \geq \beta_i$; 
    \item a sufficient condition for stability is $\delta_{i} \geq \beta_{i}$, for each $i \in \until{n}$; 
    \item a sufficient condition for stability is 
    \[
    \frac{\lambda_{2}}{\Big(1+\sqrt{1+\frac{\lambda_{2}}{\sum_{i} w_{i}\big(\delta_{i}-\beta_{i}-m\big)}}\Big)^2 n + 1} + m \geq 0,
    \]
    where 
    $\bs w$ is a positive left eigenvector of $L^*$ such that $\bs w^\top L^* = 0$ with $\max_{i} w_{i} = 1$, $m = \min_{i} (\delta_{i}-\beta_{i})$, $W = \operatorname{diag} (\bs w)$, and  $\lambda_{2}$ is the the second smallest eigenvalue of $\frac{1}{2}(W L^* + L^{*\top} W)$. 
\end{enumerate}
\end{corollary}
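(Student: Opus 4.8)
The plan is to show that the displayed inequality forces $\mu(B-D-L^*)\le 0$, which by statement (iv) of Theorem~\ref{thm:stability} is precisely asymptotic stability of the disease-free equilibrium. Since $B-D-L^*$ is an irreducible Metzler matrix, a sufficient condition for $\mu(B-D-L^*)\le 0$ is the existence of a positive diagonal $W$ with $W(B-D-L^*)+(B-D-L^*)^\top W\preceq 0$: evaluating this quadratic form on the Perron right eigenvector $\bs u\gg \bs 0_n$ gives $2\mu(B-D-L^*)\,\bs u^\top W\bs u\le 0$, and $\bs u^\top W\bs u>0$. I would take $W=\operatorname{diag}(\bs w)$. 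Because $B-D$ and $W$ are diagonal and commute, $W(B-D-L^*)+(B-D-L^*)^\top W = 2W(B-D)-(WL^*+L^{*\top}W)$, so the matrix inequality to be verified is
\[
\tfrac12\big(WL^*+L^{*\top}W\big)\ \succeq\ W(B-D)=\operatorname{diag}\big(w_i(\beta_i-\delta_i)\big).
\]

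The next step is to recognize $M_{\mathrm{sym}}:=\tfrac12(WL^*+L^{*\top}W)$ as a genuine undirected weighted graph Laplacian: its off-diagonal entries $\tfrac12(w_i l^*_{ij}+w_j l^*_{ji})$ are nonpositive, and its row sums vanish because $L^*\bs 1_n=\bs 0_n$ and $\bs w^\top L^*=\bs 0_n^\top$. Under Assumption~\ref{Assumption:StrongConnectivity} the underlying undirected graph is connected, so $M_{\mathrm{sym}}\succeq 0$ with a simple zero eigenvalue at $\bs 1_n$ and algebraic connectivity equal to the $\lambda_2$ in the statement. Writing a unit vector as $\bs u=\alpha\,\bs 1_n/\sqrt n+\bs u_\perp$ with $\bs u_\perp\perp\bs 1_n$ and $\alpha^2+\|\bs u_\perp\|^2=1$, one gets $\bs u^\top M_{\mathrm{sym}}\bs u=\bs u_\perp^\top M_{\mathrm{sym}}\bs u_\perp\ge\lambda_2\|\bs u_\perp\|^2$, so it suffices to show $\lambda_2\|\bs u_\perp\|^2+\sum_i w_i(\delta_i-\beta_i)u_i^2\ge 0$ for every unit $\bs u$.

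The heart of the argument is a lower bound on $\sum_i w_i(\delta_i-\beta_i)u_i^2$. I would split $\delta_i-\beta_i=m+(\delta_i-\beta_i-m)$. The first part contributes $m\sum_i w_iu_i^2\ge m$ (using $\max_i w_i=1$, hence $W\preceq I$, and $m\le 0$ — if $m>0$ the inequality is immediate and in any case subsumed by statement (iii)); the remaining part $\sum_i w_i(\delta_i-\beta_i-m)u_i^2\ge 0$ is bounded below using $\sum_i w_i(\delta_i-\beta_i-m)$, together with $u_i=\alpha/\sqrt n+(u_\perp)_i$ and a Cauchy–Schwarz estimate on the cross term. This reduces matters to nonnegativity on $[0,1]$ of a scalar quadratic in $\|\bs u_\perp\|$ with coefficients built from $\lambda_2$, $\sum_i w_i(\delta_i-\beta_i-m)$, $n$, and $m$; minimizing it (e.g. by the substitution $\|\bs u_\perp\|=\sin\theta$ and a short trigonometric optimization, or by completing the square) gives, after the appropriate choice of estimates, exactly the displayed closed-form inequality, with the factor $1+\sqrt{1+\lambda_2/\!\sum_i w_i(\delta_i-\beta_i-m)}$ appearing through the quadratic formula.

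The step I expect to be the main obstacle is precisely this last one: the intermediate elementary inequalities must be chosen so that the resulting scalar optimization collapses to the stated expression rather than to a different (and typically less transparent or more conservative) sufficient condition — a cruder chain gives a weaker bound, an over-refined one gives an unwieldy formula, and there is real freedom here. A secondary point requiring care is the Laplacian identification of $M_{\mathrm{sym}}$ (so that $\lambda_2$ genuinely is its second eigenvalue and the $\bs 1_n^\perp$ Rayleigh estimate is valid), together with handling the degenerate cases ($\sum_i w_i(\delta_i-\beta_i-m)=0$, i.e. all $\delta_i-\beta_i$ equal, and $m\ge 0$) separately.
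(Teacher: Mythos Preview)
Your proposal treats only part~(iv); parts (i)--(iii) are not addressed. In the paper these are short: (i) uses $L^*_{ii}=\nu_i$ (which follows from $Q^\top\bs v=\bs 0$) together with the fact that an irreducible Metzler matrix with $\mu\le 0$ must have strictly negative diagonal; (ii) evaluates the Perron eigen-equation $(B-D-L^*)\bs y=\mu\bs y$ at the index $i^*$ where $\bs y$ is minimal and exploits the zero row-sums of $L^*$; (iii) is a direct Gershgorin-disk argument.

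For part~(iv) the paper's proof is much shorter than what you outline: it does not derive the bound at all but quotes an eigenvalue lower bound of Wu~\cite{wu2005bounds} for matrices $H=L+\Delta$ with $L$ an irreducible Laplacian and $\Delta\ge 0$ diagonal, and applies it with $L=L^*$, $\Delta=D-B-mI$, after the scalar shift $L^*+D-B=H+mI$. Your plan instead reproves the content of Wu's bound from scratch via the Lyapunov matrix inequality $W(B-D-L^*)+(B-D-L^*)^\top W\preceq 0$ with the specific choice $W=\operatorname{diag}(\bs w)$, the identification of $\tfrac12(WL^*+L^{*\top}W)$ as a connected weighted Laplacian, the $\bs 1_n/\bs 1_n^\perp$ decomposition, and a scalar optimization. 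These are essentially the ingredients behind Wu's theorem, so the route is viable; the paper buys brevity by citing the finished result, while your approach would make the argument self-contained. Your caveat about the final optimization is well placed: the exact displayed formula is precisely the output of Wu's computation, and recovering it rather than a cruder sufficient condition requires making the same sequence of estimates. One further point to watch is that you are establishing the \emph{matrix} inequality for one fixed $W$, which is sufficient for $\mu\le 0$ but in principle stronger; arriving at the stated formula therefore relies on $\operatorname{diag}(\bs w)$ being the correct diagonal scaling, as it indeed is in Wu's argument.
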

\begin{proof}
We begin by proving the first two statements. First, we note that $L^*_{ii} = \nu_i$. This can be verified by evaluating $L^*=L(\bs v)$ and utilising the fact that $Q^\top \bs v = \bs 0$. The necessary and sufficient condition for the stability of disease-free equilibrium is $\mu (B-D-L^*) \leq 0$. Since, $B-D-L^*$ is an irreducible Metzler matrix, a necessary condition for $\mu \leq 0$ is that its diagonal terms are strictly negative, i.e., $\beta_i - \delta_i - \nu_i <0$, for each $i \in \until{n}$. This gives the statement (i).

Perron-Frobenius theorem for irreducible Metzler matrices implies that there exists a real eigenvalue equal to $\mu$ with positive eigenvector, i.e.,
$(B-D-L^*)\bs y = \mu \bs y $, where $\bs y \gg \bs 0_n $. Since, $\mu \leq 0$, written component-wise for $i^*$, where $y_{i^*}=\min(y_i)$ :
\begin{align*}
        & (\beta_{i^*} - \delta_{i^*} -\nu_{i^*})y_{i^*} - \displaystyle\sum_{j\neq i^*} l_{i j}y_j \leq 0 \nonumber  \\
        & \Rightarrow (\beta_{i^*} - \delta_{i^*})y_{i^*} \leq (\nu_{i^*} + \displaystyle\sum_{j\neq i^*} l_{i j})y_{i^*} + \displaystyle\sum_{j\neq i^*} l_{i j}(y_j-y_{i^*})\nonumber \\
        & \Rightarrow (\beta_{i^*} - \delta_{i^*})y_{i^*} \leq \displaystyle\sum_{j\neq i^*} l_{i j}(y_j-y_{i^*}) \nonumber \\
        & \Rightarrow \beta_{i^*} - \delta_{i^*} \leq 0 .
    \end{align*}
This proves the statement (ii).

Since, $L^*$ is a Laplacian matrix, if $\delta_i \geq \beta_i$, for each $i \in \until{n}$, from Gershgorin disks theorem \cite{Bullo-book_Networks}, $\mu \leq 0$, which proves the third statement.

For the last statement, we use an eigenvalue bound for perturbed irreducible Laplacian matrix of a digraph~ \cite[Theorem 6]{wu2005bounds}, stated below:

Let $H = L + \Delta$, where $L$ is an $n\times n$ irreducible Laplacian matrix and $\Delta \neq 0$ is a non-negative diagonal matrix, then  
\begin{equation*}
\begin{split}
  \mathrm{Re}(\lambda(H)) \geq \frac{\lambda_{2}}{\Big(1+\sqrt{1+\frac{\lambda_{2}}{\sum_{i} w_{i}\Delta_i}}\Big)^2 n + 1} > 0,
  \end{split}
\end{equation*}
where, $\bs w$ is a positive left eigenvector of $L$ such that $\bs w^\top L = 0$ with $\max_{i} w_{i} = 1$, $W = \operatorname{diag} (\bs w)$, and  $\lambda_{2}$ is the second smallest eigenvalue of $\frac{1}{2}(W L + L^\top W)$.\\
Now, in our case necessary and sufficient condition for stability of disease-free equilibrium is:
\begin{equation*}
\begin{split}
  \mathrm{Re}(\lambda(L^*+D-B)) & = \mathrm{Re}(\lambda(L^*+\Delta + mI)) \\
  & = \mathrm{Re}(\lambda(L^*+\Delta)) + m \geq 0
  \end{split}
\end{equation*}
where, $m = \min_{i} (\delta_{i}-\beta_{i})$ and $\Delta=D-B-mI$. Applying the eigenvalue bound with $H=L^*+\Delta$ gives the sufficient condition (iv). 
\end{proof}

\begin{remark}
It can be shown that $\bs v$ is the left eigenvector  associated with eigenvalue zero for both $Q$ and $L^*$, i.e., $\bs v^\top Q = \bs v^\top L^* = 0$ and thus can be re-scaled to compute $\bs w = \frac{1}{\operatorname{max_i}(v_i)}\bs v$. \oprocend
\end{remark}
\medskip
\begin{remark}
For  a given graph and the associated mobility transition rates in dynamics~\eqref{eq_Model}, let $m = \operatorname{min}_{i} (\delta_{i}-\beta_{i})$ and $i^*= \argmin_{i} (\delta_{i}-\beta_{i})$. Then, there exist $\delta_i$'s, $i\neq i^*$, that satisfy statement (iv) of Corollary \ref{cor:dis-free} if $m > \subscr{m}{lower}$, where 
\[
\subscr{m}{lower}=-\frac{\lambda_2}{4n+1}.
\] \oprocend
\end{remark}

\begin{remark}(\bit{Influence of mobility on stability of disease-free equilibrium.})
The statement (iv) of Corollary~\ref{cor:dis-free} characterizes the influence of mobility on the stability of disease-free equilibria. In particular, $\lambda_2$ is a measure of ``intensity" of mobility and $m$ is a measure of largest deficit in the recovery rate compared with infection rate among nodes. The sufficient condition in statement (iv) states explicitly how mobility can allow for stability of disease-free equilibrium even under deficit in recovery rate at some nodes. \oprocend
\end{remark}

\section{Numerical Illustrations} \label{Sec: numerical studies}

We start with numerical simulation of epidemic model with mobility in which we treat epidemic spread as well as mobility as stochastic processes. We take $20$ simulations with same initial conditions and parameters and take the average of the results. The fraction of infected populations for different cases are shown in Fig.~\ref{fig:Stochastic}. We take a line graph and the mobility transition rates being equal among out going neighbors of a node. The two cases relate to the stable disease-free equilibrium and stable endemic equilibrium respectively. We have chosen heterogeneous curing or infection rates to elucidate the influence of mobility. If the curing rates, infection rates and the initial fraction of infected population is same for all the nodes, mobility does not play any role. The corresponding simulations of the deterministic model as per Proposition \ref{prop:model} are also shown for comparison. Figure~\ref{fig:Stochastic}~(a) corresponds to the case $\delta_i \geq \beta_i$ for each $i$, whereas Fig.~\ref{fig:Stochastic}~(c) corresponds to the case $\delta_i< \beta_i$ for each $i$. The results support statements (iii) and (ii) of Corollary \ref{cor:dis-free} and lead to, respectively, the stable disease-free equilibrium and the stable endemic equilibrium.

\begingroup
\centering
\begin{figure}[ht!]
\centering
\subfigure[Stable disease-free equilibrium: Stochastic model]{\includegraphics[width=0.23\textwidth]{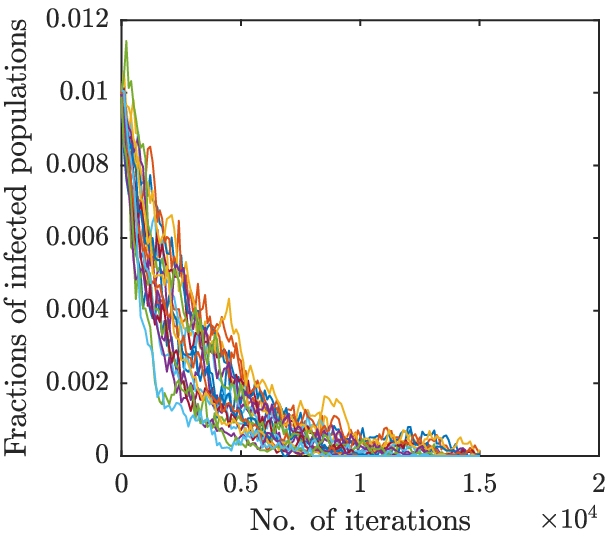}}\label{fig:Stochastic_a}
\subfigure[Stable disease-free equilibrium: Deterministic model]{\includegraphics[width=0.23\textwidth]{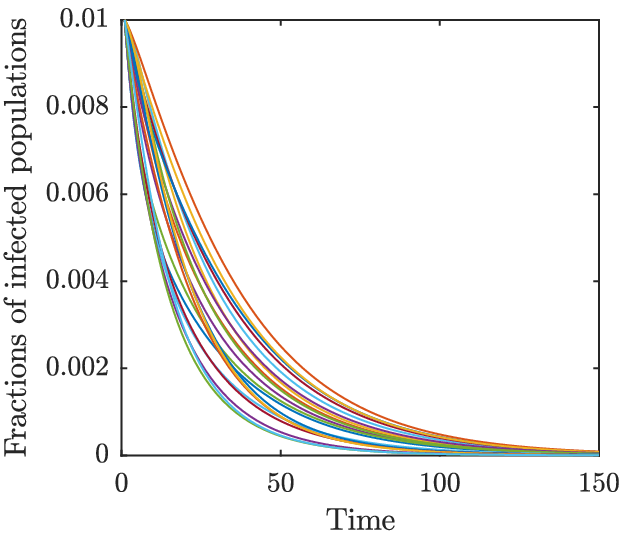}}\label{fig:Stochastic_b}
\subfigure[Stable endemic equilibrium: Stochastic model]{\includegraphics[width=0.23\textwidth]{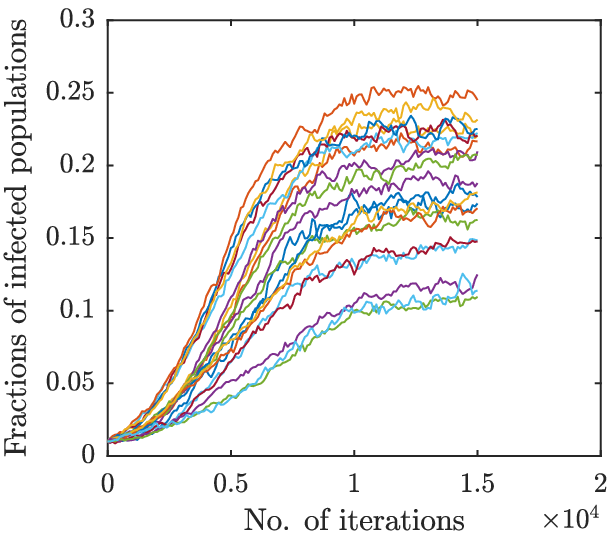}}\label{fig:Stochastic_c}
\subfigure[Stable endemic equilibrium: Deterministic model]{\includegraphics[width=0.23\textwidth]{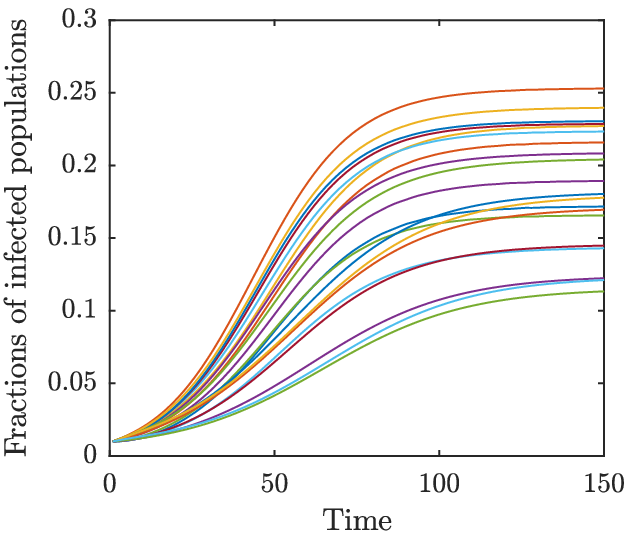}}\label{fig:Stochastic_d}
\caption{Stochastic simulation of epidemic spread under mobility. Line graph, $n=20$, $\nu(i) = 0.2$, $q_{i j}=\frac{\nu(i)}{D_{out}}$, $p_i(0)=0.01$. Each iteration in stochastic model corresponds to time-step $0.01$ sec.}
\label{fig:Stochastic}
\end{figure}
\endgroup
Once we have established the correctness of deterministic model predictions with the stochastic simulations, we study the simulations of deterministic model only. We study the effect of mobility over 4 different mobility graph structure - line graph, ring graph, star graph and a complete graph. First we keep the equilibrium distribution of population same for all the four graphs by using instantaneous transition rates from Metropolis-Hastings algorithm \cite{Hastings_MetroplisHastingsMC}. This shows the effect of different mobility graph structure on epidemic spread while the equilibrium population distribution remains the same. Fig.~\ref{fig:Deterministic_SameMobilityEqb} shows the fractions of infected population trajectories for $20$ nodes connected with $4$ different graph structures. The nodes have heterogeneous curing rates and these rates are the same across different graph structures. The values of equilibrium fractions are affected by the presence of mobility and are different for different graph structures. As seen in Fig.~\ref{fig:Deterministic_SameMobilityEqb}, star graph has the widest distribution of equilibrium infected fraction values whereas complete graph has the narrowest of the four.

\begingroup
\centering
\begin{figure}[ht!]
\centering
\subfigure[Line graph]{\includegraphics[width=0.23\textwidth]{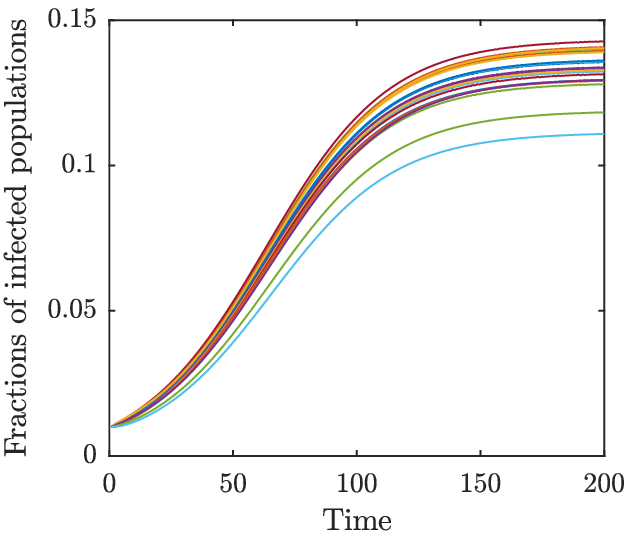}}\label{fig:SameMobilityEqbLine}
\subfigure[Ring graph]{\includegraphics[width=0.23\textwidth]{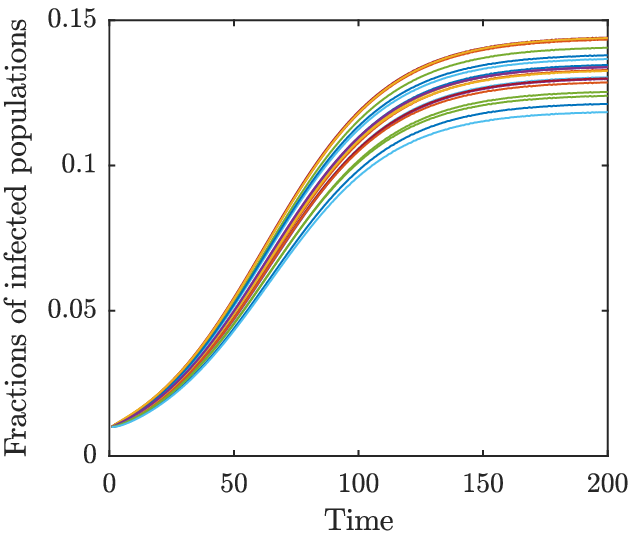}}\label{fig:SameMobilityEqbRing}
\subfigure[Star graph]{\includegraphics[width=0.23\textwidth]{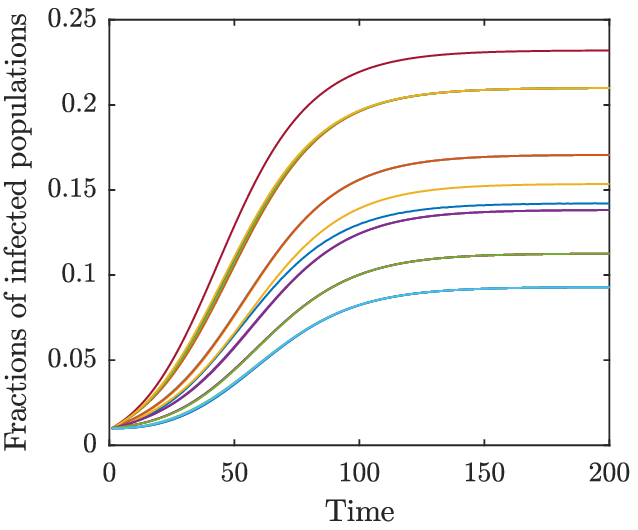}}\label{fig:SameMobilityEqbStar}
\subfigure[Complete graph]{\includegraphics[width=0.23\textwidth]{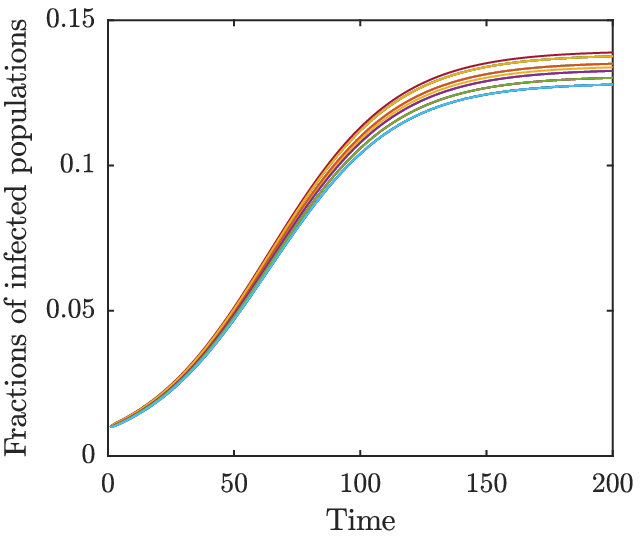}}\label{fig:SameMobilityEqbComplete}
\caption{Simulation of deterministic model of epidemic spread under mobility, with same equilibrium distribution of population over 4 different graph structure with stable endemic equilibrium. $n=20$, $p_i(0)=0.01$.}
\label{fig:Deterministic_SameMobilityEqb}
\end{figure}
\endgroup

Next, we verify the statement (iv) of Corollary \ref{cor:dis-free}, where one can have some curing rates $\delta_i$ less than the infection rates $\beta_i$ but still have stable disease-free equilibrium. We take a complete graph of $n=20$ nodes with given mobility transition rates which give us $\bs w$, $L^*$ and $\lambda_2$. We take a given set of values of $\beta_i$. Next, we compute $\subscr{m}{lower} = -\frac{\lambda_2}{4n+1}$ and take $0.8$ times of this value as $m$ in order to compute $\delta_i$'s that satisfy statement (iv) of Corollary \ref{cor:dis-free}. For our case the values are: $\beta_i=0.3$, $\lambda_2=0.2105$, $\subscr{m}{lower}=-0.0026$, $m = 0.8~ \subscr{m}{lower}=-0.0021$, $\delta_1=\delta_n=\beta_i+m$ and the rest $\delta_i$ computed to satisfy the condition which gives $\delta_1 = \delta_n = 0.2979$ and $\delta_i = 0.3198$ for $i \in \{2,\dots,n-1\}$. Fig.~\ref{fig:Lambda2 sufficient cond Complete graph} shows the trajectories of infected fraction populations. As can be seen the trajectories converge to the disease-free equilibrium.

\begin{figure}[ht!]
    \centering
    \includegraphics[width=0.9\linewidth]{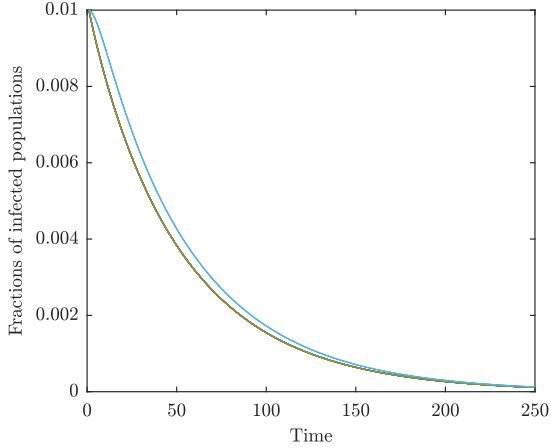}
    \caption{Stable disease-free equilibrium with curing rates computed as per the $\lambda_2$ sufficient condition (statement (iv), Corollary \ref{cor:dis-free}) for stability of disease-free equilibrium. Graph: Complete, $n=20$, $p_i(0)=0.01$.}
    \label{fig:Lambda2 sufficient cond Complete graph}
\end{figure}

\section{Conclusions} \label{Sec: conclusions}

We derived a continuous-time model for epidemic propagation under Markovian mobility across a network of sub-populations. The epidemic spread within each node has been modeled as SIS population model. The derived model has been analysed to establish the existence and stability of disease-free equilibrium and an endemic equilibrium under different conditions.
Some necessary and some sufficient conditions for stability of disease-free equilibrium have been established. We also provided numerical studies to support our results and elucidated the effect of mobility on epidemic propoagation.


\appendix
\subsection{Proof of Theorem 1 (iii): Existence of an endemic equilibrium} \label{Appendix: existence of non-trivial eqb}
We show below that in the case of $\mu (B-D-L^*) > 0$ , there exists an endemic equilibrium $\bs p^*$, i.e.,
\begin{equation} 
    \dot{\bs p} |_{\bs p =\bs  p^*} = (B-D-L^*- P^* B) \bs p^* = 0 .
\end{equation}

We use Brouwer's fixed point theorem, similar to the derivation in \cite{fall2007epidemiological}. Rearranging the terms and writing the above as an equation in $\bs p$ to be satisfied at non-trivial equilibrium $\bs p^*$ leads to:

\begin{equation} \label{eqAppendixeqb}
    (L^*+D)((L^*+D)^{-1} B - I)\bs p = P B \bs p .
\end{equation}
Define $A := (L^*+D)^{-1} B$. Since $A^{-1} = B^{-1} (L^*+D)$ is a non-singular M-matrix, its inverse $A$ is non-negative \cite{berman1994nonnegative}. Rearranging  \eqref{eqAppendixeqb} leads to 
\begin{equation}
   \bs p = H (\bs p) = (I + AP)^{-1}A \bs p .
\end{equation}

Now we show that $H(\bs p)$ as defined above is a monotonic function in the sense that $\bs p_{2} \geq \bs p_{1}$ implies $H(\bs p_{2}) \geq H(\bs p_{1})$. Define $\tilde{\bs p} := \bs p_2 - \bs p_1$ and $\tilde{P} := \operatorname{diag}(\tilde{\bs p})$. Then,
\begin{equation} \label{eqH}
    \begin{split}
          & \!\!\!\!\!   H(\bs p_{2}) - H(\bs p_{1}) \\
        &\!\!\!\!\! =  (A^{-1}+P_{2})^{-1}\bs p_{2} -  (A^{-1}+P_{1})^{-1}\bs p_{1} \\
        &\!\!\!\!\! = (A^{-1}+P_{2})^{-1}(\bs p_{2} - (A^{-1}+P_{2})(A^{-1}+P_{1})^{-1}\bs p_{1})\\
        & \!\!\!\!\!= (A^{-1}+P_{2})^{-1} (\tilde{\bs p} - \tilde{P}(A^{-1}+P_{1})^{-1}\bs p_{1})\\
        & \!\!\!\!\!= (A^{-1}+P_{2})^{-1} (I - \operatorname{diag}((A^{-1}+P_{1})^{-1}\bs p_{1}))\tilde{\bs p} .
    \end{split}
\end{equation}
Since $(A^{-1}+P_{2}) = B^{-1}(L^*+D) + P_{2}$ is an M-matrix its inverse and hence the first term above is non-negative. The second term is shown to be non-negative as below:
\begin{equation} \label{eqIAP}
    \begin{split}
    &(I - \operatorname{diag}((A^{-1}+P_{1})^{-1}\bs p_{1})) \\
        & = (I - \operatorname{diag}((I + A P_{1})^{-1} A P_{1} \bs 1_{n})) \\
        & = \operatorname{diag}((I - (I + A P_{1})^{-1} A P_{1}) \bs 1_{n}) \\
         & = \operatorname{diag}((I + A P_{1})^{-1}\bs 1_{n}) \\
         & = \operatorname{diag}((A^{-1} + P_{1})^{-1} A^{-1}\bs 1_{n}) \\
         & \geq 0 ,
    \end{split}
\end{equation}
where we have used the identity: 
\begin{equation}
    (I + X)^{-1} = I - (I+X)^{-1}X ,
\end{equation}
in the second line. The last inequality in \eqref{eqIAP} holds as $A^{-1} \bs 1_{n} = B^{-1} (L^* + D) \bs 1_{n} = B^{-1} D \bs 1_{n} \geq \bs 0_{n}$ and $(A^{-1} + P_{1})^{-1} \geq 0$ the inverse of an M-matrix. The last term in the last line of \eqref{eqH} is $\tilde{\bs p} \geq \bs 0_n$. This implies that $H(\bs p)$ is a monotonic function. Also, result in \eqref{eqIAP} implies that $H(\bs p) \leq \bs 1_n$ for all $\bs p \in [0,1]^n$. Therefore $H(\bs 1_n) \leq \bs 1_n$.\\

Convergent splitting property of irreducible M-matrices \cite{berman1994nonnegative} implies $\mu (B-D-L^*) > 0$ if and only if $R_0= \rho(A) = \rho((L^* +D)^{-1} B) > 1$. Here $\rho(A)$ is spectral radius of $A$. Since $A$ is an irreducible non-negative matrix, Perron-Frobenius theorem implies $\rho(A)$ is a simple eigenvalue with right eigenvector $\bs u$ satisfying $A \bs u = \rho (A) \bs u = R_0 \bs u$ , with $\bs u \gg \bs 0_n$. Define $U :=\operatorname{diag}(\bs u)$ and $\gamma := \frac{R_0 -1}{R_0}$. Now, we find a value of $\epsilon > 0$ such that $H(\epsilon \bs u)\geq \epsilon \bs{u}$ as below:
\begin{equation}
    \begin{split}
        H(\epsilon \bs u) - \epsilon \bs u & = (I+\epsilon AU)^{-1} A\epsilon \bs u - \epsilon \bs u \\
        & = (I - (I+\epsilon AU)^{-1} \epsilon AU) \epsilon R_0 \bs u -\epsilon \bs u \\
        & = \epsilon R_0 (\frac{(R_0-1)}{R_0} \bs u - (I+\epsilon AU)^{-1} \epsilon AU \bs u) \\
        & = \epsilon R_0 (\gamma \bs u - (I+\epsilon AU)^{-1} \epsilon AU \bs u) .
    \end{split}
\end{equation}

Now, the expression in the brackets in the last line is a continous function of $\epsilon$ and is equal to $\gamma \bs u \gg \bs 0_n $ at $\epsilon =0$. Therefore, there exists an $\epsilon > 0 $ such that $H(\epsilon \bs u) - \epsilon \bs u \geq \bs 0_n$ or equivalently, $H(\epsilon \bs u)\geq \epsilon \bs{u}$. Taking the closed compact set $K = [ \epsilon \bs u, \bs 1_n]$, $H: K \rightarrow K$ is a continuous function. Therefore, by Brouwer's fixed point theorem, there exists a fixed point in K. This proves the existence of a non-trivial equilibrium $\bs p^* \gg \bs 0_n$ when $\mu (B-D-L^*) > 0$ or equivalently $R_0 >1$. The uniqueness is further shown in the following proposition.

\begin{proposition}
If the mapping $H$ has a strictly positive fixed point, then it is unique.
\end{proposition}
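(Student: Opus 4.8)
The plan is to exploit two structural properties of the map $H(\bs p) = (I + AP)^{-1}A\bs p = (A^{-1} + P)^{-1}\bs p$: monotonicity, which is already established in the derivation above, together with \emph{strict subhomogeneity}, namely that $H(\lambda \bs y) \gg \lambda H(\bs y)$ for every $\bs y \gg \bs 0_n$ and every scalar $\lambda \in (0,1)$. These two properties are the standard hypotheses that force a monotone sublinear map to have at most one strictly positive fixed point.

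First I would establish strict subhomogeneity. Writing $Y = \operatorname{diag}(\bs y)$ and applying the resolvent identity to $(A^{-1} + \lambda Y)^{-1} - (A^{-1} + Y)^{-1}$, one obtains
\[
  H(\lambda \bs y) - \lambda H(\bs y) = \lambda(1-\lambda)\,(A^{-1} + \lambda Y)^{-1}\, Y\, H(\bs y) .
\]
Now $A^{-1} = B^{-1}(L^* + D)$ is a nonsingular M-matrix, and both $A^{-1} + Y$ and $A^{-1} + \lambda Y$ are Z-matrices dominating it entrywise, hence they too are nonsingular M-matrices; moreover they inherit the irreducible off-diagonal pattern of $L^*$ associated with the strongly connected graph $\mathcal G$, so their inverses are entrywise strictly positive. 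Consequently $H(\bs y) \gg \bs 0_n$, then $Y H(\bs y) \gg \bs 0_n$, then $(A^{-1} + \lambda Y)^{-1} Y H(\bs y) \gg \bs 0_n$, which yields $H(\lambda \bs y) \gg \lambda H(\bs y)$ for $\lambda \in (0,1)$.

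Next, suppose $H$ has two strictly positive fixed points $\bs p^*$ and $\bs q^*$, and set $\lambda^* := \min_i p^*_i / q^*_i$, so that $\lambda^* \bs q^* \le \bs p^*$ with equality in the component $i_0$ attaining the minimum, and $\lambda^* \in (0,\infty)$. If $\lambda^* < 1$, then monotonicity gives $H(\lambda^* \bs q^*) \le H(\bs p^*) = \bs p^*$, while strict subhomogeneity gives $H(\lambda^* \bs q^*) \gg \lambda^* H(\bs q^*) = \lambda^* \bs q^*$; chaining these inequalities yields $\bs p^* \gg \lambda^* \bs q^*$, contradicting equality in component $i_0$. Hence $\lambda^* \ge 1$, i.e. $\bs p^* \ge \bs q^*$; interchanging the roles of $\bs p^*$ and $\bs q^*$ gives the reverse inequality, so $\bs p^* = \bs q^*$.

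I expect the subhomogeneity step to be the main obstacle, and within it the delicate point is not the algebraic resolvent identity but verifying that $A^{-1}+Y$ and $A^{-1}+\lambda Y$ are genuinely \emph{irreducible} nonsingular M-matrices, so that their inverses are strictly (not merely weakly) positive — this is precisely where Assumption~\ref{Assumption:StrongConnectivity} is needed. Once strict positivity is in hand, the $\lambda^*$-comparison argument is entirely routine.
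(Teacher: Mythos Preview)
Your argument is correct and follows essentially the same route as the paper: both proofs combine the already-established monotonicity of $H$ with a strict homogeneity inequality and the standard extremal-ratio comparison. The only cosmetic difference is that the paper works with $\eta = \max_i p^*_i/q^*_i$ and the superhomogeneity inequality $H(\eta \bs q^*) < \eta H(\bs q^*)$ for $\eta>1$, whereas you work with the dual quantity $\lambda^* = \min_i p^*_i/q^*_i$ and the subhomogeneity inequality $H(\lambda^*\bs q^*) \gg \lambda^* H(\bs q^*)$ for $\lambda^*<1$; your explicit invocation of irreducibility (Assumption~\ref{Assumption:StrongConnectivity}) to obtain \emph{strict} positivity of the M-matrix inverses is in fact a point the paper uses implicitly but does not spell out.
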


\begin{proof}
The proof is similar to the proof of \cite[Proposition A.3]{khanafer_Basar2016stabilityEpidemicDirectedGraph} and is given below:\\
Assume there are two strictly positive fixed points: $\bs p^*$ and $\bs q^*$. Define 
\[\eta:=\max \frac{p^*_i}{q^*_i},\quad  k:= \arg\max
\frac{p^*_i}{q^*_i}\]
Therefore, $\bs p^* \leq \eta \bs q^*$. Lets assume $\eta >1$. First we will show that $H(\eta \bs q^*) < \eta H(\bs q^*)$ as follows:

\begin{equation}
\begin{split}
 & H(\eta \bs q^*)-\eta H(\bs q^*) \\
 & = (I + A\eta Q^*)^{-1}A \eta \bs q^* -\eta (I + A Q^*)^{-1}A \bs q^* \\
 &= ((I + A\eta Q^*)^{-1}A - (I + A Q^*)^{-1}A)\eta \bs q^* \\
  &= ((A^{-1} + \eta Q^*)^{-1} - (A^{-1} + Q^*)^{-1})\eta \bs q^* \\
  &= (A^{-1} + \eta Q^*)^{-1}(I - (A^{-1} + \eta Q^*)(A^{-1} + Q^*)^{-1})\eta \bs q^* \\
  &= (A^{-1} + \eta Q^*)^{-1}(-(\eta-1) (A^{-1} + Q^*)^{-1})\eta \bs q^*\\
  &< 0
\end{split}
\end{equation}
 where the last inequality uses result that inverse of a non-singular M-matrix is non-negative and non-singular, that $\eta > 1$ and, that $\bs q^* \gg \bs 0$ by assumption. Consequently
\begin{equation}
    p^*_k = H_k(\bs p^*)\leq H_k(\eta \bs q^*)< \eta H_k(\bs q^*) = \eta q^*_k,
\end{equation}
Since, $\eta q^*_k = p^*_k$ by definition, if $\eta > 1$, we have from above $p^*_k<p^*_k$, a contradiction. Hence, $\eta \leq 1$ which implies $\bs p^* \leq \bs q^*$. By switching the roles of $\bs p^*$ and $\bs q^*$ and repeating the above argument we can show $\bs q^* \leq \bs p^*$. Thus $\bs p^* = \bs q^*$ and hence there is a unique strictly positive fixed point.
\end{proof}

\footnotesize

\end{document}